\documentclass{sigcomm-alternate}


\usepackage{cite}
\usepackage{url}
\usepackage{graphicx}
\usepackage{subfig}
\usepackage{color}



\newtheorem{mythm}{Theorem}


\title{You Really Need A Good Ruler to Measure Caching Performance in Information-Centric Networks}

\author{Liang Wang, Jussi Kangasharju$^*$, Jon Crowcroft\\ \\University of Cambridge, UK \qquad University of Helsinki, Finland$^*$}

\begin{document}
\maketitle

\begin{abstract}
  Information-centric networks are an interesting new paradigm for distributing content on the Internet.
  They bring up many research challenges, such as addressing content by name, securing content, and wide-spread caching of content.
  Caching has caught a lot of attention in the research community, but a lot of the work suffers from a poor understanding of the different metrics with which caching performance can be measured.
  In this paper we not only present a comprehensive overview of different caching metrics that have been proposed for information-centric networks, but also propose the coupling factor as a new metric to capture the relationship between content popularity and network topology.
  As we show, many commonly used metrics have several failure modes which are largely ignored in literature.
  We identify these problems and propose remedies and new metrics to address these failures.
  Our work highlights the fundamental differences between information-centric caches and ``traditional'' cache networks and we demonstrate the need for a systematic understanding of the metrics for information-centric caching.
  We also discuss how experimental work should be done when evaluating networks of caches.
\end{abstract}

\section{Introduction}
\label{sec:intro}



Information-centric networking (ICN) provides a new paradigm for addressing and accessing content on the Internet.
The current Internet was developed as a host-centric network, where the main focus was on interconnecting computers, or hosts, but the modern usage of Internet is very much information-centric, i.e., users do not care from where the information they want to access comes from; they simply are interested in getting the information.
The web is in its essence a host-centric system, although content delivery networks (CDN) and technologies do break the dependence on (particular) hosts serving specific content to some extent.
However, fundamentally the web is still a host-centric system and its different components, such as naming and security, are tied to this host-centric world.

ICN puts the focus on the content as opposed to the hosts to address the architectural issues preventing the web from becoming a full-blown information-centric system.
There are several independent proposals around ICN~\cite{jacobson:ccn, koponen:dona, netinf, psirp}.
They each present a different solution to try to re-construct the current Internet, and build a new architecture around the notion of content.
While the details in the proposal differ, we can identify three common components that are fundamental to information-centric networking: addressing content by name, securing content, and wide-spread caching.

Of these three, the last one, wide-spread caching, seems to have attracted the most attention in the research world lately.
Our focus in this paper is on measuring the effectiveness of caching, but first we outline the main issues in naming and security, highlighting in particular how they impact caching.

Addressing content by name is an important change to how content is addressed in the web.
Although URLs are ``names'' of content, they have internal structure which indicates the server hosting that content as well as a local ``path'' on the server to the content.
While at first sight similar, names in ICN may have structure, but the structure does not identify a particular server in the network that would need to be accessed to retrieve the content.

Content discovery is a big challenge in ICN and two different choices seem to emerge from the ICN proposals.
One possibility is to use an indexing service~\cite{netinf} which keeps track of copies of objects; however it is not clear if this will scale up to a global scale.
The other possibility, used in most of the other ICN proposals, is to route requests based on some components in the content name, with the hope that this routing converges on a copy of the object.
Especially in the latter case (sometimes combined with well controlled scoped-flooding\cite{wang:diluvian}), en-route caching becomes an attractive option to speed up discovery and spread the load on content distribution.
In this paper, we mainly follow this kind of a model and assume that content requests are routed in the same way from all over the network and that the routing converges towards existing copies of the content.
Caching is assumed to happen en-route and cached copies are not tracked in any way.

Security on the web is essentially based on identifying the server providing the content via SSL and its associated certificates.
Since content no longer has a single origin in ICN, this approach does not work anymore.
Instead, the ICN approaches all focus on securing the content, by ensuring via signatures and public keys that the content has not been tampered with in the network.
This also allows caching to take place since any piece of content, no matter from which server it is served from, can be authenticated to be the same content that the originator put in the network.
Obviously, this does not tie the content to a real world entity, but this can be achieved in a similar way to how it is done on the web.

Finally, wide-spread caching is used to store the content in the network and allow for faster delivery.
Caching also reduces traffic in the network and is therefore attractive for network operators since it has the potential to reduce their costs.

Although all three above factors are fundamental to ICN, caching seems to have attracted the most attention in recent research.
Caching is a topic that has been researched in many different contexts and it is attractive in the sense that it can be measured quantitatively with relative ease, whereas effectiveness of naming schemes or security solutions tend towards more qualitative measurements.
However, a lot of the work on caching in ICN uses the ``old'' caching metrics that are known from processors or web caches.
As we discuss in this paper, ICN is a network of caches and there are fundamental differences between ICN caching and, e.g., web caching.
Web caches can also be organised in networks, however they work in a fundamentally different way from the network of caches in ICN.

Our key contributions in this paper are to highlight the fundamental differences between different caching metrics and show how they impact the metrics that should be used to measure the effectiveness of caching.
We specifically propose using the coupling factor as a new ICN metric to capture the relationship between content popularity and network topology.
We present several metrics and show how they vary in their complexity and expressiveness by using a dimensionality notion to illustrate where hits occur in a network.
The goal of this paper is to demonstrate that caching in ICN is a novel area of research and that existing measurement solutions have only limited applicability in this field.

\section{System View}
\label{sec:system-view}

In this paper we focus on CCN-like~\cite{jacobson:ccn} ICN where requests for (pieces of) content are forwarded via routers and these routers are equipped with a cache where they can store content.
We focus on the case of a single ISP, as shown in Figure~\ref{fig:network-model}, which depicts several clients, one server, and a network of routers.
Some of the routers are connected towards clients and some towards servers.
We do not distinguish whether these are actual clients or other ISPs that are connected to the clients; they represent the incoming requests.
Likewise, the servers represent sources of content and do not necessarily need to be connected to this particular ISP.
From the ISP's point of view, the traffic reduction that caching brings can have two goals.
Firstly, it reduces traffic towards the servers (\emph{inter-ISP traffic}) which typically has a (direct) financial cost for the ISP.
Secondly, it reduces traffic within the ISP's own network (\emph{intra-ISP traffic}).
Intra-ISP traffic has no direct connection to the ISP's costs, however lower intra-ISP traffic means that the ISP is able to serve more customers with the same infrastructure, which does have a positive financial effect.

\begin{figure}[!tb]
  \centering
  \includegraphics[width=8.5cm]{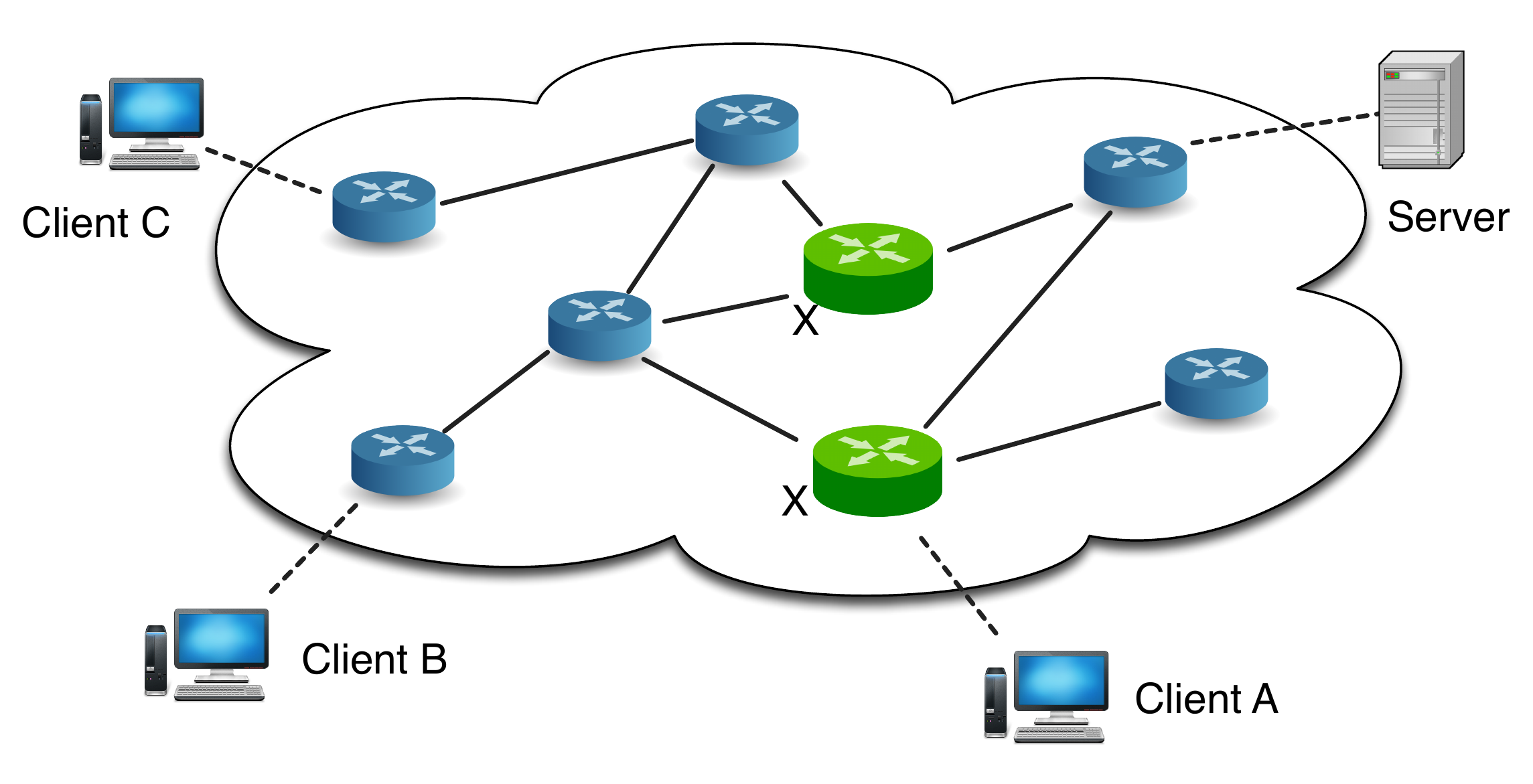}
  \caption{Model of the network}
  \label{fig:network-model}
\end{figure}

The fundamental difference between ICN caching and, for example, web caching is that in ICN caching we have a \emph{network of caches} that can work together to optimise the performance of the whole network.
Although hierarchical web caching implemented similar networks of caches, each cache was operated by a different real-world entity and attempted to optimise its own performance.
This led to issues like the filtering effect~\cite{Wolman:1999:SPC:319151.319153,Williamson:2002:FEW} where first caches in the hierarchy capture the most popular objects because they attempt to optimise their own performance.
This in turn leads to the following caches to see a request stream with less locality, making their performance suffer.
In the context of web caching where every cache is operated by a different entity, this is reasonable, but in the context of ICN, a single entity controls multiple caches and is able to make them cooperate to optimise the overall performance.
Results in~\cite{wong:globecom2012} show that even a simple randomization of where to cache a particular piece of content has a significant boost of overall performance because it mitigates the filtering effect.
As example, consider Figure~\ref{fig:network-model}.
In the web caching model, the routers next to the clients would cache the most popular content, but in ICN, it is feasible to have some other routers do that, for example the green routers marked with X.
In other words,~\cite{wong:globecom2012} shows that caches being less greedy in optimising their own performance is beneficial to the whole system.
This implies that when evaluating networks of ICN caches, \emph{we need to look at the performance of the whole network} instead of optimising performance of individual caches.

\section{Metrics}
\label{sec:metrics}

We now present the main contribution of the paper and outline different metrics that can be used to measure performance of a network of caches.
We consider three metrics that have been used in literature and present a new metric called \emph{coupling factor}.

\begin{figure*}[!tb]
  \centering
  \subfloat[1-dimension metric, BHR only tells you that \textit{how many} hits happen in the \textit{whole} network. The whole network is treated as a single entity.]{\includegraphics[width=5.5cm]{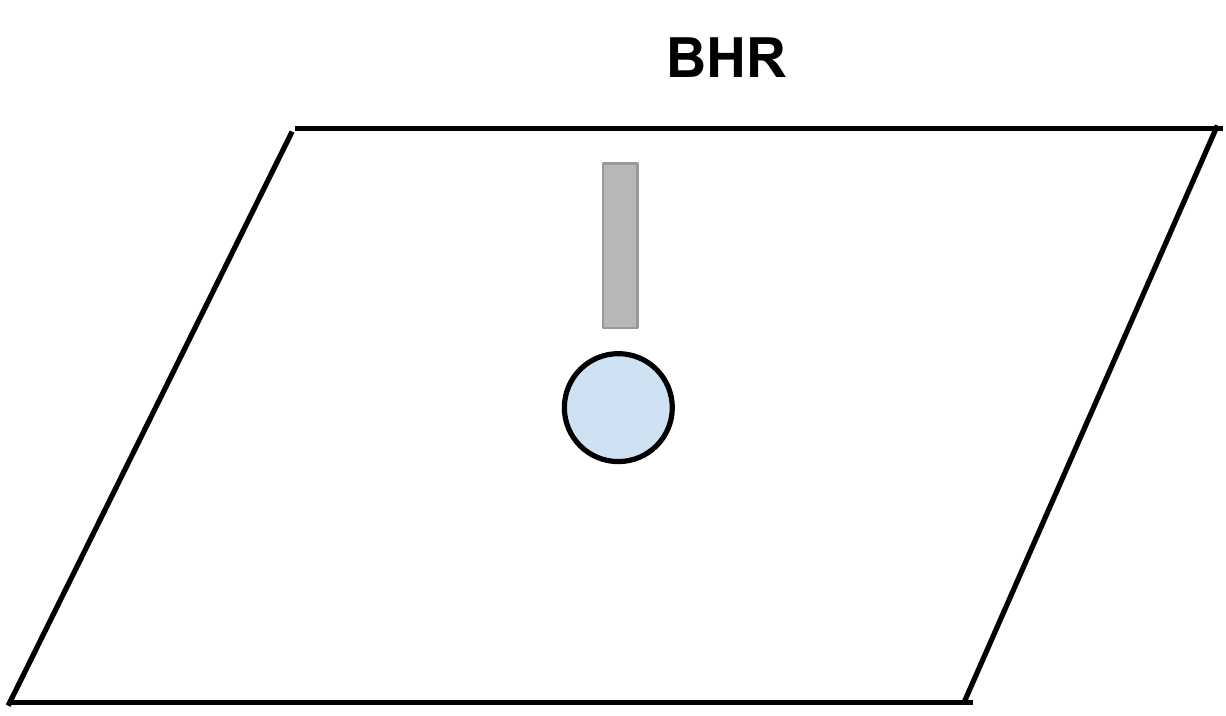}}
  \quad
  \subfloat[2-dimension metric, FPR tells you both \textit{how many} hits and \textit{where} they happen along a path (source $\leftrightarrow$ destination) on average in the network.]{\includegraphics[width=5.5cm]{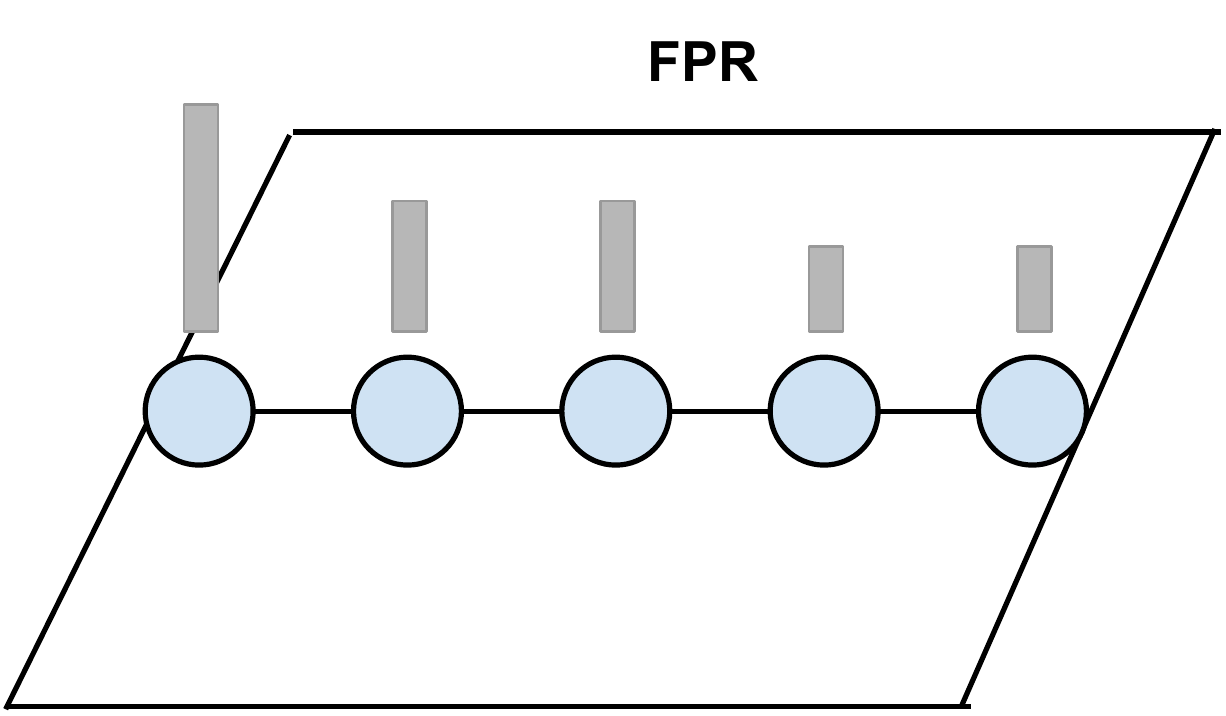}}
  \quad
  \subfloat[3-dimension metric, CPF tells you \textit{how many} hits and \textit{where} they happen on average in the network (e.g., network edge or network core).]{\includegraphics[width=5.5cm]{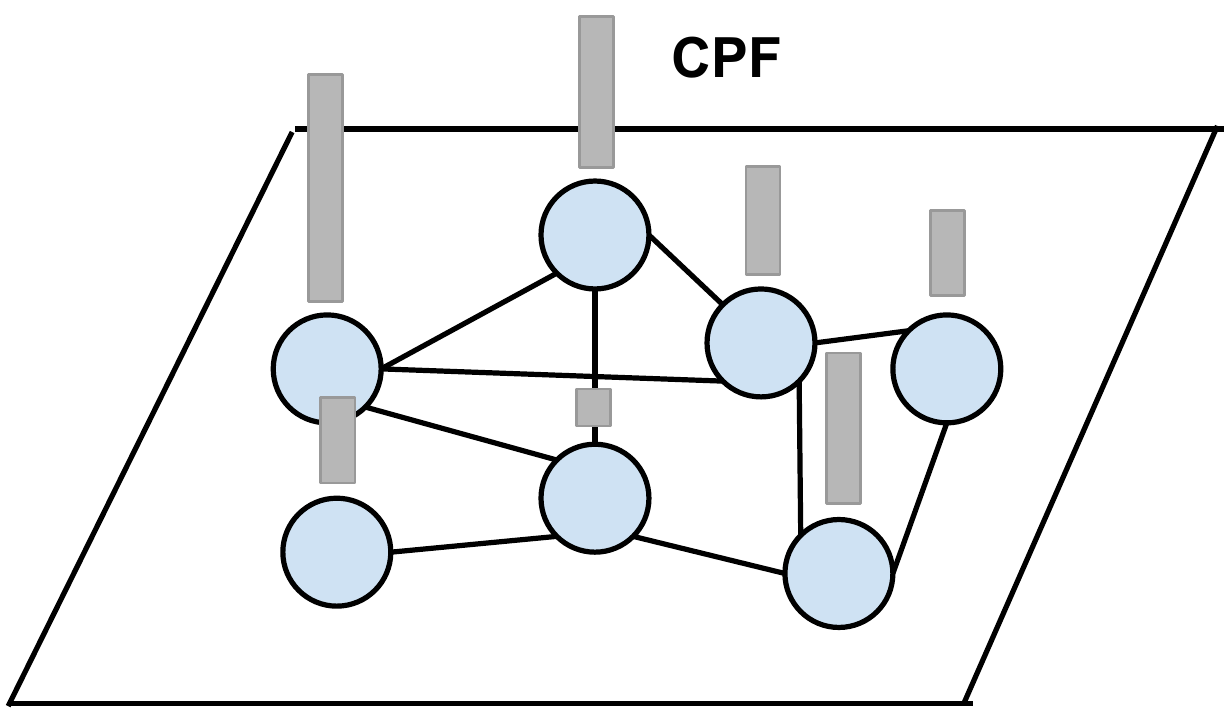}}
  \caption{An illustration of the information contained in three measurement metrics: (Byte) Hit Rate (BHR), Footprint Reduction (FPR), and Coupling Factor (CPF). The calculation becomes more complicated as the information increases.
}
  \label{fig:coupling}
\end{figure*}

\subsection{(Byte) Hit Rate}
\label{sec:hit-rate}

Typically cache performance has been measured via hit rate, which captures the ratio between cache hits (requests found in the cache) to the total number of requests seen by the cache.
Byte hit rate is its natural extension where every hit is weighted by the size of the object, hence byte hit rate measures the reduction in outgoing traffic from the cache.
As our focus is on traffic reduction, we use byte hit rate in the following.
When we apply byte hit rate as the performance metric, we effectively aggregate the whole network of caches as a single cache and look at its performance.
(Note that since multiple caches may hold a copy of the same object in ICN, such an ``aggregate'' cache has less storage than the individual caches together; this does not influence the metric.)
Byte hit rate is location-agnostic since it only cares whether there was a hit in any cache; it does not provide any information about where the hit happened.

Byte hit rate is an often-used metric, partly because caches have traditionally been measured by hit rate, partly because it is easy to compute, and partly because it translates directly to savings in inter-ISP traffic, i.e., financial savings.
Reducing duplicate copies of objects in the network is the most effective way of improving byte hit rate; however an efficient reduction in number of copies requires an efficient cooperation method between the caches to discover the cached copies. Typical examples are various distribute key-value stores \cite{maymounkov:kademlia, ratnasamy:can} which is able to combine distributed caches into a big logical cache via collaboration.

While byte hit rate is easy to compute, it treats the network of ICN caches as a black box since it does not take into account where the hit happens.
Another argument against byte hit rate as a metric stems from the current trends of content distribution in the Internet.
Large content delivery networks or content providers, who host the most popular content, install their servers in or close to the ISPs where the users are.
Although the servers are in the ISP's network, the normal way of calculating byte hit rate would consider them external, thus traffic to them would be counted the same way as any outgoing traffic; yet there is typically no cost to the ISP for traffic to them.
Hence, most of the popular content actually comes from inside the ISP, and only the savings in the less popular content are relevant for the ISP.
Byte hit rate is not able to capture this and therefore we recommend that it should not be used as a general metric; in specific situations it may be appropriate, but it is not appropriate as a general metric for all situations.

\subsection{Average Hops}
\label{sec:average-hops}

Measuring the number of hops a request needs to traverse in order to find the content is also a metric that has been recently used~\cite{seyedicn2013}.
It is appealing in the sense that it augments byte hit rate by taking into account where the hit happens, however it does not provide any meaningful way of estimating savings in outgoing traffic.
In addition, as is done in~\cite{seyedicn2013}, average hops is sometimes used as a proxy for download latency.

Our previous work~\cite{wong:globecom2012} shows that average hops as a metric does not discriminate well, i.e., while the qualitative ranking of caching solutions is correct, the quantitative differences between them are very small, which can easily lead to an impression that the performance differences would be small~\cite{seyedicn2013}.
Other metrics we consider in this paper do not share this weakness.
The reason behind this is that average hops measures absolute values and because many networks are scale-free, the number of hops is typically small.
Hence, differences between caching strategies will appear small, but this is actually an artifact of the metric, not an indicator that the strategies would be close to each other according to other metrics.

Another difficulty in using average hops as a metric relates to what value to assign to content retrieved from outside the ISP, i.e., a miss.
Assigning a high value puts emphasis on avoiding misses, i.e., the metric becomes similar to hit rate.
Assigning a low value emphasizes the location of content in the ISP's network, i.e., it gives an impression of intra-ISP traffic.
However, as the amount of data is not part of the metric, it does a poor job in capturing something useful and a better metric, like footprint reduction described below, is needed.

\subsection{Footprint Reduction}
\label{sec:footprint-reduction}

Traffic footprint in traffic engineering is defined as a product of traffic volume and the distance it travels within the network.
The distance is usually measured in terms of hops.
To calculate the footprint, we need to sum up all the products of every data packet sizes and their travel distances.
Footprint reduction is the fraction of reduction in footprint when caching is enabled (comparing to ``no caching'' case). Although we choose ``no caching'' as the baseline to calculate the footprint reduction, we must point out that the ranking of different strategies in an evaluation is irrelevant to the choice of the baseline strategy. The corresponding proof can be found in Appendix.

Compared to byte hit rate, footprint reduction takes into account \emph{where} the hit happens, since the number of hops is counted in the metric.
However, since footprint reduction uses the size of the content, it gives more accurate information about traffic reduction than simply using the average hops.
Also, it measures relative change and gives therefore a better picture of the differences between caching strategies.
Note that footprint reduction measures only reduction of intra-ISP traffic and does not give any indication about possible reductions in inter-ISP traffic.

Byte hit rate and footprint reduction are the two key metrics in evaluating performance of networks of ICN caches, but they must be used in conjunction; using only one of them leads to biased results (using only average hops will lead to even more bias).
For example, consider two caching strategies which achieve the same byte hit rate, but different footprint reductions.
Higher footprint reduction indicates that the hits happen closer to the clients, thus less intra-ISP traffic and generally better performance for the users.
We have identified and quantified the tradeoff between byte hit rate and footprint reduction~\cite{WangL:Cooperation} and will briefly outline this tradeoff below.

A naive solution for improving footprint reduction is to place the popular content as close to the clients as possible, i.e, edge caching~\cite{seyedicn2013}.
However, this leads to large redundancy in cached content and leads to (much) lower byte hit rate.
This tradeoff between the two metrics is mediated by a \emph{cooperation policy} which enables richer cooperation between the clients than a simple en-route caching allows.
As discussed in~\cite{WangL:Cooperation}, the tradeoff can be mediated by adjusting the number of copies for a content item and the range of how widely we search for the content in the network in case of a miss.
The search range covers all possible cases from en-route caching to searching the whole network (obviously with a cost that would need to be accounted for).
Adjusting the number of copies is harder to do exactly, but simple mechanisms like Cachedbit~\cite{wong:globecom2012} are likely to be sufficient in many cases.
For more details, we refer the reader to~\cite{WangL:Cooperation,wong:globecom2012}.

\subsection{Coupling Factor}
\label{sec:coupling-factor}

\begin{figure*}[!tb]
  \centering
  \subfloat[Strong coupling]{\includegraphics[width=5.5cm]{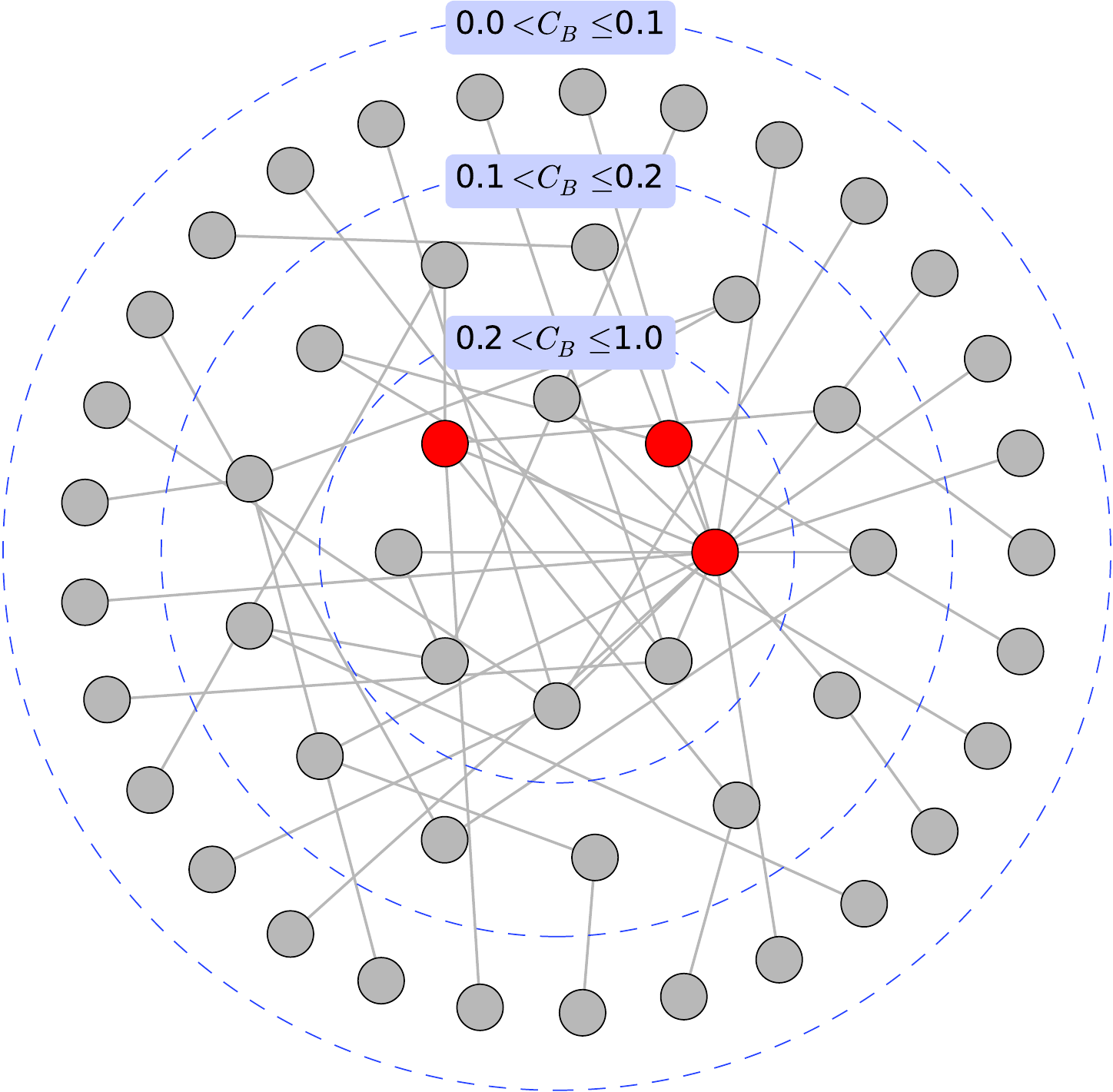}} 
  \quad
  \subfloat[Moderate coupling]{\includegraphics[width=5.5cm]{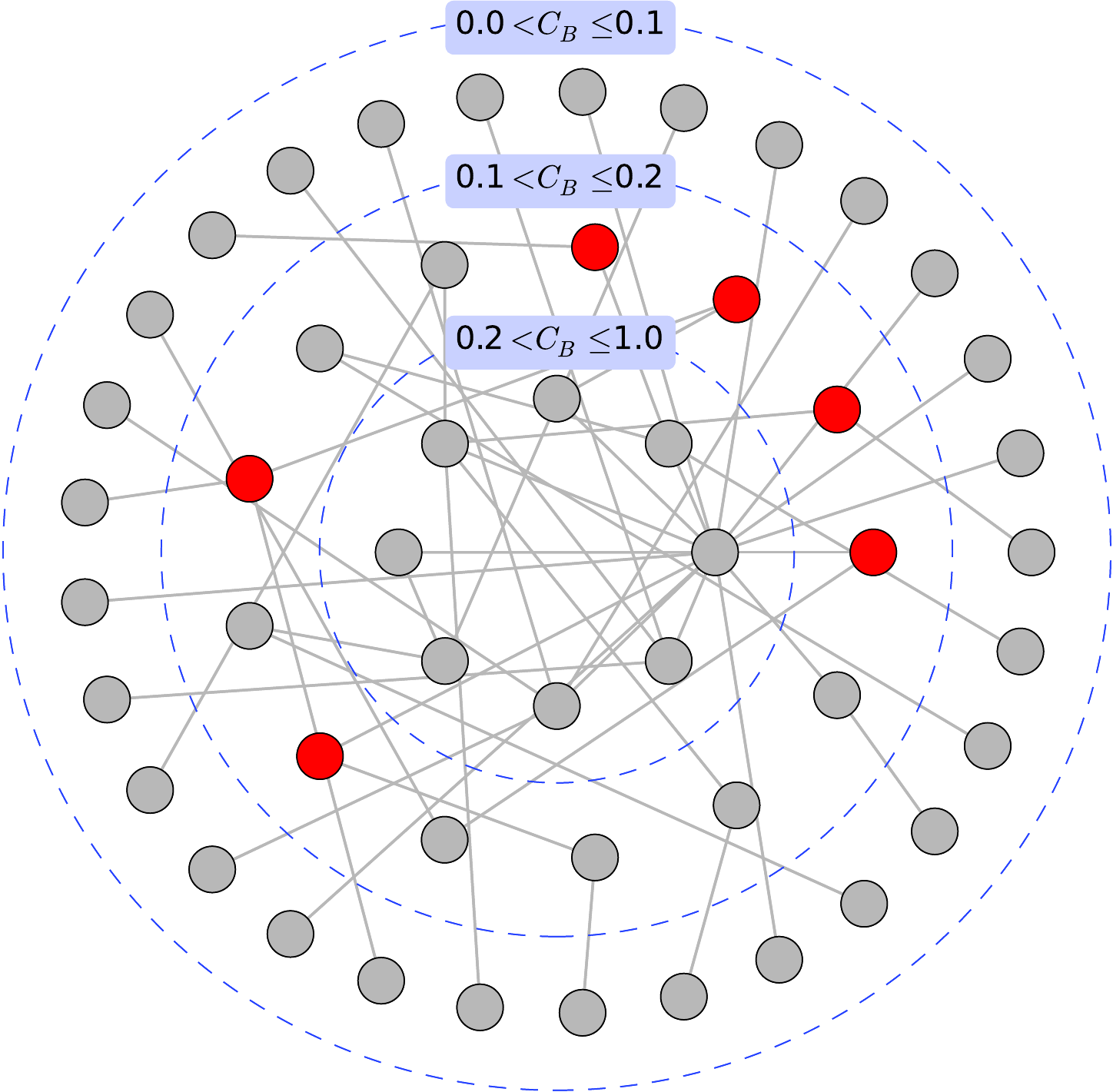}}
  \quad
  \subfloat[Weak coupling]{\includegraphics[width=5.5cm]{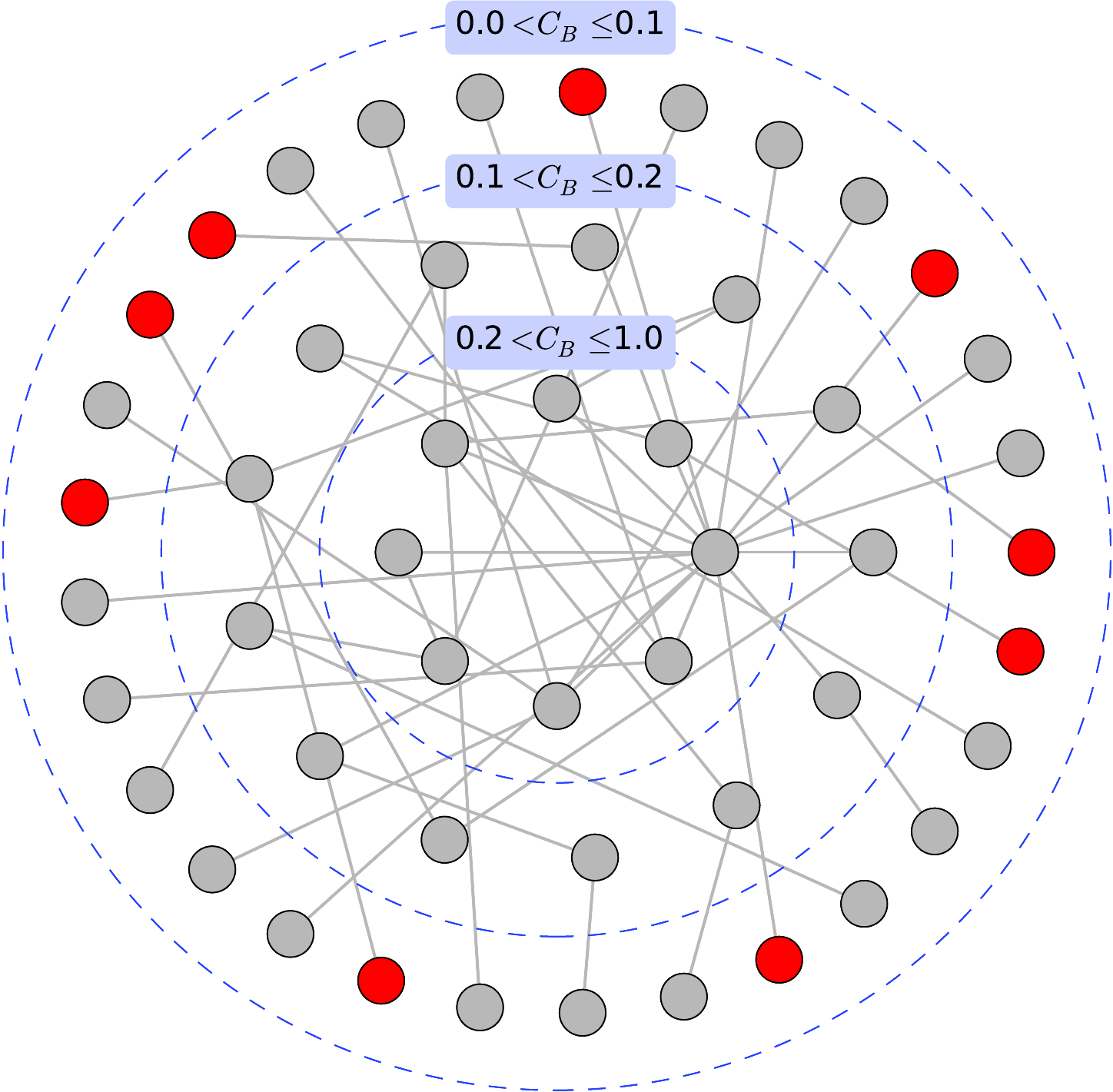}}
  \caption{Coupling between content popularity and network topology. The nodes are grouped with three concentric circles according to their betweenness centrality values $C_B$. Red color represents where the most popular content reside.}
  \label{fig:coupling}
\end{figure*}

We propose a new metric in this paper: the \emph{coupling factor}, to capture the effects of the network topology on the performance of caching.
We achieve this by identifying the ``position'' in the network where the hit happens.
Recall that byte hit rate does not give any information about where the hit happens, and footprint reduction is limited to finding content only along the routing path.
A cooperation policy that searches wider in the network is able to find content in other locations as well.
In this case, the position has a direct impact on the calculation of the metrics.


We define the coupling factor as a function of content popularity and network topology and it measures the impact of topology on content placement, and thus the impact on metrics like byte hit rate and footprint reduction.
Content popularity is easy to obtain, but for characterising topology, we have many more options, such as degree centrality, betweenness centrality, closeness centrality and so on.
Therefore, coupling factor can have several forms depending on which metrics are used in calculating the correlation, but the general idea is the same: we need a way of showing the relationship between popularity and topology.
For example, \cite{WangL:Cooperation} specifically chooses betweenness centrality and Pearson correlation in the calculation.

Figure~\ref{fig:coupling} shows how different degrees of coupling affect the placement of the most popular content.
The red dots represent the most popular content and the concentric circles group nodes according to their betweenness centrality.
Strong coupling means that the most popular content is placed in the nodes with high betweenness, i.e., the network core.
Weak coupling means the opposite, i.e., the popular content is placed at the network edge.
(Strictly speaking, if using correlation between popularity and node degree as a metric, strong coupling is indicative of strong positive correlation and weak coupling implies strong negative correlation.)
By adjusting the range of the search and the number of copies, we can influence the placement of content in the system, i.e., adjust the degree of coupling.
When the popular content is in the core, we improve byte hit rate and when the popular content is near the edge, we favor footprint reduction.
This means that the two parameters, search range and number of copies, can be used to adjust the tradeoff between the two metrics.

\section{Other Useful Metrics}
\label{sec:other}

The metrics we presented in Section~\ref{sec:metrics} are by no means the only metrics that can be used for measuring cache performance.


As is well known, the popularity distribution significantly influences caching performance in all kinds of caches.
However, new content is constantly added and popularity of content changes which may influence the metrics that are being used to measure the performance of caches.
Typically, some kind of aging is used to rid the system of old popularity information and test a caching strategy's ability to adapt to changes in content popularity.
Since popularity typically follows a power law and is characterised by a (single) parameter, it is a common technique to measure the effects of changes in that parameter on other system metrics.


Another interesting measure would be to find a way to quantify the filtering effects.
Filtering effect has been noticed in hierarchical cache systems~\cite{Williamson:2002:FEW}, and it refers to the phenomenon that the popularity of the most popular part in a miss stream is ``flattened'' because the downstream router filters out the most popular content.
Filtering effect degrades the caching performance of upstream routers.
We do not yet have a good metric to quantify the filtering effect, however results in~\cite{wong:globecom2012} seem to indicate that simple solutions might be able to fight the filtering effect, thus obviating the need for its measurement.
However, this is a question for future research.


As mentioned, content popularity changes dynamically, but also the network topology changes, be it due to failures or simply an evolution of the infrastructure.
This question needs more research in order to evaluate how often these kinds of topology changes affect an information-centric cache network and how large the impact would be.


\section{Beyond the Metrics: How to Run Experiments}
\label{sec:exp}

Choosing the right metrics is the first step, but it is not enough.
Designing experiments is as critical as selecting the right measurements, since it directly influences the ability to extract the right information from the experiment, as needed by the metrics that are to be computed.
A poorly designed experiment will not allow the correct information to be extracted, leading to possibly erroneous conclusions.
In terms of ICN experiments, there are three key elements in experiment design: \emph{content}, \emph{topology}, and \emph{traffic model}.


Content popularity is a key factor affecting performance of caching systems.
In the absence of publicly available request traces from recent years, many researchers are forced to use synthetic request traces.
For synthetic traces, it is important that its characteristics match those of realistic traces as closely as possible; obviously it will not be an exact match, but basic statistical characteristics should match real traces.
Research has shown that real world traces can often be modeled by either Zipf of Weibull distributions~\cite{Cha:2007:ITY:1298306.1298309}.
A further aspect of modeling the content is the size of the content set, i.e., how many objects are included in the synthetic trace; a real trace has its own fixed number of objects.
The reason why the content set size is important is the ``heavy tail'' nature of these popularity models which drags down performance is the content set is too large (for the amount of cache storage in the system).


Topology of the network is also of high importance in a correct experimental setting.
Research has showed that most realistic networks, including the Internet are scale-free, so a Barab\'{a}si-Albert model could be used to generate synthetic network topologies.
However, as work in~\cite{SpringN:Rocketfuel} shows, real ISP router topologies do not always conform to the scale-free model, so it is also important to experiment with real topologies.
As opposed to request traces, network topologies, both at ISP and AS level, are readily available, so unless the experiment setting requires a network much larger than exists in traces, using a real network topology is preferable to synthetic topologies.
Obviously, work needs to consider many different topologies in order to eliminate specifics of particular topologies from the results.


The third element, traffic, has two components: how much traffic and between which points it goes?
In the real world, requests from users at the network edge and traffic volume is typically proportional to the population of an edge routers geographical location.
These observations have led to the so-called ``gravity model'' which has been used in many papers.
Typically servers are connected to routers with high degree, i.e., near the core of the network and clients are connected to edge routers.
The question then becomes how many clients should the network have?
If all routers are assigned as servers or clients, there is no ``intermediate layer'' of caches and this may lead to a stronger filtering effect which degrades the performance of the system.
Based on our experience, we have observed that placing a small number of servers according to the gravity model and then assigning 20--30\% of the edge routers as clients works well.
In addition, the placement of clients should be randomized, varied from one experiment run to another, and we should perform a sufficient number of repetitions to guarantee statistically meaningful results, using for example confidence intervals to determine the number of repetitions.


\section{Lessons Learned}
\label{sec:lessons-learned}

We now summarize the key findings and observations in the paper.

\begin{itemize}
\item We need to evaluate the performance of the network as a whole as opposed to optimising individual caches.
\item Byte hit rate measures only reduction in inter-ISP traffic and due to content providers installing their servers close to clients may not reflect a true reduction in outgoing (costly) traffic.
\item Using average hops as a metric is not a good idea since it does not discriminate well and has issues related to assigning the number of hops for cache misses.
\item Footprint reduction measures reduction in intra-ISP traffic and gives more information about where the hits happen.
  It is to be preferred over byte hit rate and average hops.
\item A more refined metric, like the coupling factor, can help characterise the performance by looking at several aspects (content popularity and topology in this case) and provide insight into the inherent behavior of a network of caches.
\item Experiment setup needs careful thinking and control in order to get meaningful results.
\end{itemize}

\section{Conclusion}
\label{sec:conclusion}

In this article, we have argued that measuring the performance of caching in information-centric networks is fundamentally different from previous networks of caches, like web caching hierarchies.
We have discussed different metrics and showed common mistakes in use of metrics in caching in information-centric networks.
In addition, we have highlighted some lesser-known metrics which we consider more appropriate, and have proposed new metrics that capture more fine-grained information about the performance of an ICN caching network.
Finally, we have considered how ICN caching experiments should be set up and discussed key elements of experimental work.


\appendix
\begin{mythm}
In an evaluation, the ranking of caching strategies using footprint reduction metric is irrelevant to the choice of baseline strategy.
\end{mythm}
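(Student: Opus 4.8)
The plan is to observe that, within a fixed evaluation, footprint reduction is nothing more than a strictly decreasing affine rescaling of a strategy's raw footprint, and that the rescaling constants depend only on the baseline, not on the strategy being ranked. Fix the topology, the request trace, and the collection of caching strategies under comparison; for a strategy $s$ write $F_s>0$ for its traffic footprint, i.e., the sum over all delivered data packets of packet size times hop distance, and write $F_b>0$ for the footprint of a chosen baseline strategy $b$ (``no caching'' or any other). The footprint reduction of $s$ relative to $b$ is
\begin{equation}
  \mathrm{FPR}^{b}_{s} \;=\; \frac{F_b - F_s}{F_b} \;=\; 1 - \frac{F_s}{F_b}.
\end{equation}

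The key step is then immediate: $F_b$ is a single positive constant for the whole evaluation, so for any two strategies $s_1,s_2$ we have $\mathrm{FPR}^{b}_{s_1} > \mathrm{FPR}^{b}_{s_2}$ if and only if $F_{s_1} < F_{s_2}$, a condition that does not mention $b$ at all. Hence the (weak) order on strategies induced by $\mathrm{FPR}^{b}$ equals the order ``smaller raw footprint is better'' for every admissible baseline $b$, and in particular ties are preserved; this is exactly the assertion. I would close by checking the two degenerate points that make the statement well posed: the baseline must satisfy $F_b>0$ (true for any non-trivial network and trace, since some content must traverse at least one hop), and ``baseline'' must mean one fixed strategy applied to every entry of the comparison rather than a per-strategy choice.

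There is essentially no real obstacle to overcome here; the only things requiring a sentence of care are making explicit that the baseline footprint enters every reported $\mathrm{FPR}$ value as the same scaling factor — so that division by it is an order isomorphism of $(0,\infty)$, composed with the order-reversing shift $x\mapsto 1-x/F_b$ — and noting that this monotonicity argument survives verbatim even in the pathological case where some cooperation-heavy strategy has $F_s>F_b$ and therefore a negative footprint reduction.
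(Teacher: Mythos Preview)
Your proposal is correct and uses essentially the same idea as the paper: footprint reduction is a strictly monotone (affine) rescaling of the raw footprint, so the induced ranking is baseline-independent. The paper phrases this by explicitly computing the affine map $y'=ay+b$ between two baseline choices, whereas you factor every baseline through the raw-footprint order; the mathematical content is the same, and your formulation is slightly more direct.
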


\begin{proof}
By definition, ``no caching'' strategy is selected as
a baseline in calculating footprint reduction metric. In other words, it
represents the traffic saving comparing to a special caching strategy
in which all the caches are disabled. Doubts have been raised that if
we change to another caching strategy as baseline, whether the results
(or observation) will remain valid. Below, we give a simple proof to
show it actually does not matter which strategy we choose as a
baseline as the theorem above states, the results are affine.

Let $x_{\alpha}$, $x_{\beta}$ and $x_{\theta}$ denote the traffic
footprint for caching strategy $\alpha$, caching strategy $\beta$, and
no caching $\theta$ respectively, and $y_{\alpha}$, $y_{\beta}$, and
$y_{\theta}$ are the corresponding footprint reduction. Our current
way of calculating footprint reduction is defined as:
\begin{align}
  & y_{\alpha} = 1 - \frac{x_{\alpha}}{x_{\theta}} \label{eq:1}\\
  & y_{\beta} = 1 - \frac{x_{\beta}}{x_{\theta}}  \label{eq:2} \\
  & y_{\theta} = 1 - \frac{x_{\theta}}{x_{\theta}} = 0 \quad
  \text{(baseline)} \label{eq:3}
\end{align}

As we can see, $x_{\theta}$ is just the baseline we are comparing
to. Obviously, the footprint reduction $y_{\theta}$ is zero when
comparing against itself. We can of course change the baseline to the
caching strategy $\beta$'s footprint $x_{\beta}$, then we have the
corresponding new metrics $y_{\alpha}'$, $y_{\beta}'$ and
$y_{\theta}'$ calculated as below:
\begin{align}
  & y_{\alpha}' = 1 - \frac{x_{\alpha}}{x_{\beta}} \label{eq:4} \\
  & y_{\beta}' = 1 - \frac{x_{\beta}}{x_{\beta}} = 0 \quad \text{(baseline)} \label{eq:5} \\
  & y_{\theta}' = 1 - \frac{x_{\theta}}{x_{\beta}} \label{eq:6}
\end{align}

From Eq (\ref{eq:2}), we have $x_{\beta} = x_{\theta}(1 -
y_{\beta})$. If we let $a = 1 - y_{\beta}$ and $b = 1 -
\frac{1}{1-y_{\beta}}$, then Eq (\ref{eq:4}), (\ref{eq:5}) and
(\ref{eq:6}) can be rewritten as follows

\begin{align}
  y_{\alpha}' & = 1 - \frac{x_{\alpha}}{x_{\beta}} = 1 - \frac{x_{\alpha}}{x_{\theta}(1 - y_{\beta})} \\ 
  & = \frac{1}{1-y_{\beta}} (1 - \frac{x_{\alpha}}{x_{\theta}}) + 1 - \frac{1}{1-y_{\beta}} = a y_{\alpha} + b \label{eq:7}
\end{align}

\begin{align}
  y_{\beta}' & = 1 - \frac{x_{\beta}}{x_{\beta}} = 1 - \frac{x_{\beta}}{x_{\theta}(1 - y_{\beta})} \\
  & = \frac{1}{1-y_{\beta}} (1 - \frac{x_{\beta}}{x_{\theta}}) + 1 - \frac{1}{1-y_{\beta}} = a y_{\beta} + b \label{eq:8}
\end{align}

\begin{align}
  y_{\theta}' & = 1 - \frac{x_{\theta}}{x_{\beta}} = 1 - \frac{x_{\theta}}{x_{\theta}(1 - y_{\beta})} \\
  & = \frac{1}{1-y_{\beta}} (1 - \frac{x_{\theta}}{x_{\theta}}) + 1 - \frac{1}{1-y_{\beta}} = a y_{\theta} + b \label{eq:9}
\end{align}
So we can see that the new metrics ($y_{\alpha}'$, $y_{\beta}'$ and
$y_{\theta}'$) are simply affine transformation of the old ones
($y_{\alpha}$, $y_{\beta}$ and $y_{\theta}$). It means the footprint
reduction is independent of the choice on which caching strategy as
baseline, since it will not change the ``ranking'' of the results.  In
some sense, geometrically, it simply means where we want to set our
``origin'' point, namely ``0'' point.
\end{proof}



\end{document}